\theoremstyle{plain}
\newtheorem{theorem}{Theorem}
\theoremstyle{definition}
\theoremstyle{remark}
\newcommand{\nri}{n\rightarrow\infty}
\newcommand{\bbR}{\mathbb{R}}
\newcommand{\bbC}{\mathbb{C}}
\newcommand{\bbD}{\mathbb{D}}
\newcommand{\bbN}{\mathbb{N}}
\newcommand{\mcc}{\mathcal{C}}
\newcommand{\mcl}{\mathcal{L}}
\newcommand{\mcm}{\mathcal{M}}
\DeclareMathOperator*{\Real}{Re}
\DeclareMathOperator*{\Imag}{Im}
\title[]{Orthogonal Polynomials on the Unit Circle, Mutually Unbiased Bases, and Balanced States}
\author[]{Graeme Reinhart and Brian Simanek}
\date{}
\begin{document}
\maketitle

\begin{abstract}
Two interesting phenomena for the construction of quantum states are that of mutually unbiased bases and that of balanced states.  We explore a constructive approach to each phenomenon that involves orthogonal polynomials on the unit circle.  In the case of mutually unbiased bases, we show that this approach does not produce such bases.  In the case of balanced states, we provide examples of pairs of orthonormal bases and states that are balanced with respect to them.  We also consider extensions of these ideas to the infinite dimensional setting.
\end{abstract}

\vspace{4mm}

\footnotesize\noindent\textbf{Keywords:} Mutually Unbiased Bases, Balanced States, Christoffel-Darboux Formula

\vspace{2mm}

\noindent\textbf{Mathematics Subject Classification:} Primary 42C05; Secondary 81P15

\vspace{2mm}

\normalsize

\section{Introduction}\label{intro}

In $\bbC^n$, two orthonormal bases $\{\phi_j\}_{j=1}^n$ and $\{\psi_j\}_{j=1}^n$ are said to be \textit{mutually unbiased} if
\begin{equation}\label{dotc}
|\langle\phi_j,\psi_k\rangle|=1/\sqrt{n}
\end{equation}
for every $j,k=1,\ldots,n$.  For example, each pair of bases from among the set
\[
\left\{\begin{pmatrix}1\\0\end{pmatrix},\begin{pmatrix}0\\1\end{pmatrix}\right\},\qquad\qquad\left\{\begin{pmatrix}1/\sqrt{2}\\1/\sqrt{2}\end{pmatrix},\begin{pmatrix}1/\sqrt{2}\\-1/\sqrt{2}\end{pmatrix}\right\},\qquad\qquad\left\{\begin{pmatrix}1/\sqrt{2}\\i/\sqrt{2}\end{pmatrix},\begin{pmatrix}1/\sqrt{2}\\-i/\sqrt{2}\end{pmatrix}\right\}
\]
forms a pair of mutually unbiased bases for $\bbC^2$.  In this case we say that these three bases are mutually unbiased.  It is known that in $\bbC^n$, one can find no more than $n+1$ mutually unbiased bases \cite[Theorem 3.5]{BBRV14} and it is known that this bound is sharp if $n$ is a power of a prime number \cite{ASSW14,BBRV14,WF89}.  If $n$ is not a power of a prime number, then it is unknown if this bound is sharp.

The proof of the sharpness of the bound $n+1$ in prime powered dimensions is constructive (see \cite{WF89}, and then later \cite{BBRV14}) and depends on the existence of a finite field with size $n$.  Ongoing work to resolve the question of sharpness of this bound in other dimensions has motivated searches for new methods of constructing mutually unbiased bases.  This work is motivated by precisely that problem.  We will consider families of orthonormal bases derived from families of orthogonal polynomials on the unit circle and for which we can evaluate the dot products \eqref{dotc} in closed form.  The calculation will show that such bases are never mutually unbiased.

Related to the notion of mutually unbiased bases is a state that is \textit{balanced} with respect to two orthonormal bases (see \cite{ASSW14}).  If $\{\phi_j\}_{j=1}^n$ and $\{\psi_j\}_{j=1}^n$ are two orthonormal bases for $\bbC^n$, then a vector $\chi$ is balanced with respect to these bases if the collections of numbers
\[
\left\{|\langle\chi,\phi_j\rangle|\right\}_{j=1}^n,\qquad\qquad\mbox{and}\qquad\qquad\left\{|\langle\chi,\psi_j\rangle|\right\}_{j=1}^n
\]
are the same.  Balanced states were initially considered in \cite{ASSW14} with respect to mutually unbiased bases, but the notion of a balanced state can be applied to any pair of orthonormal bases.  Given two orthonormal bases, they can be made to form the columns of unitary matrices $U$ and $V$.  A state $\vec{w}$ is balanced with respect to these bases if
\[
U^*\vec{w}=PV^*\vec{w},
\]
where $P$ is a matrix of all zeros except each row and column has exactly one non-zero entry, which is on the unit circle.  We see that $\vec{w}$ must be an eigenvector for $VP^*U^*$ with eigenvalue $1$, and the matrix $P$ can always be adjusted so that $1$ is an eigenvalue of this matrix.  We will consider a different approach to finding balanced states that allows us to extend this idea to infinite dimensional spaces.  Indeed, if the operators in question act on infinite dimensional spaces, then $VP^*U^*$ need not have any eigenvalues.

The idea behind the terminology for mutually unbiased bases and balanced states comes from quantum mechanics.  Recall that an orthonormal basis $\{\phi_j\}_{j=1}^n$ for $\bbC^n$ determines a measurement that can be applied to any state vector (that is, any normalized vector) $\chi$ in $\bbC^n$.  The possible outcomes of the measurement are the basis vectors and the probability of outcome $\phi_j$ is $|\langle\chi,\phi_j\rangle|^2$.  Furthermore, if the outcome $\phi_j$ is observed, then the state vector emerges from the measurement in state $\phi_j$.  Thus, if $\{\phi_j\}_{j=1}^n$ and $\{\psi_j\}_{j=1}^n$ are two orthonormal bases for $\bbC^n$, then to say they are mutually unbiased means that if the measurements are applied in succession, the probability distribution on the outcomes of the second measurement is uniform.  The terminology for a balanced state can be understood similarly.  If $\{\phi_j\}_{j=1}^n$ and $\{\psi_j\}_{j=1}^n$ are two orthonormal bases for $\bbC^n$, which determine two measurements $M_{\phi}$ and $M_{\psi}$ on a state vector in $\bbC^n$, then to say that $\chi$ is balanced with respect to these bases means that if one prepares many systems in state $\chi$ and performs the same measurement ($M_{\phi}$ or $M_{\psi}$) on each such state, then one cannot get any information about which measurement was performed just by looking at the distribution of the measurement outcomes.

We will explore mutually unbiased bases and balanced states using orthogonal polynomials on the unit circle (OPUC).  Given a Borel probability measure $\mu$ whose support is an infinite subset of the unit circle $\partial\bbD$ in the complex plane, one can consider the corresponding sequence of orthonormal polynomials $\{\varphi_m\}_{m=0}^{\infty}$.  By evaluating these polynomials at certain points of $\partial\bbD$, one can obtain an orthonormal basis for $\bbC^n$.  If $\lambda\in\partial\bbD$, then one can form the Alexandrov Family of probability measures associated to $\mu$, which we denote by $\{\mu^{(\lambda)}\}_{|\lambda|=1}$.  The relationship between $\mu$ and each $\mu^{(\lambda)}$ is complicated (see Section \ref{alex} below) and is most easily understood through the orthonormal polynomials.  Each $\mu^{(\lambda)}$ comes with its own sequence of orthonormal polynomials $\{\varphi_m^{(\lambda)}\}_{m=0}^{\infty}$ and again one can obtain an orthonormal basis for $\bbC^n$.  We can evaluate the inner product between any two vectors in these orthonormal basis using a new mixed Christoffel-Darboux formula (see Theorem \ref{mixalex} bellow) and we will see that these basis are never mutually unbiased.

There is also interest in considering mutually unbiased bases in infinite dimensional spaces (see \cite{WW08}).  While it is easy to see that the definition of mutually unbiased basis cannot be directly adapted to the infinite dimensional setting, one can replace vectors by generalized vectors to obtain a meaningful analog.    One can similarly discuss balanced states in the mutually unbiased setting.  If we again permit the use of generalized vectors, we can find two uncountable collections of vectors $\{\phi_w\}_{w\in I}$ and $\{\psi_w\}_{w\in I}$ and a state $\chi$ so that the sets
\[
\left\{|\langle\chi,\phi_w\rangle|\right\}_{w\in I},\qquad\qquad\mbox{and}\qquad\qquad\left\{|\langle\chi,\psi_w\rangle|\right\}_{w\in I}
\]
are the same.  The construction will make heavy use of a specific example of OPUC where precise formulas are available for the polynomials and other related quantities.

The next section will present the necessary background about OPUC.  Section \ref{newmix} will include our first new result, which is the general mixed Christoffel-Darboux formula mentioned earlier.  Section \ref{app} will apply this formula to show that the bases in question are not mutually unbiased, to construct a balanced state with respect to two orthonormal bases, and to present analogous results in the infinite dimensional case.

\section{OPUC Background}\label{mixed}

In this section we will present some background information about orthogonal polynomials on the unit circle.  We will confine our attention to the specific formulas and theorems that we will use elsewhere in this paper.  The interested reader can consult the references \cite{OPUC1,OPUC2} for additional information.  For the rest of this paper, we will use $\mu$ to denote a Borel probability measure whose support is an infinite subset of $\partial\bbD$.

\subsection{The Szeg\H{o} Recursion}\label{recur}

One can perform Gram-Schmidt on the sequence $\{1,z,z^2,\ldots\}$ in the space $L^2(\mu)$ to obtain the sequence of orthonormal polynomials $\{\varphi_n\}_{n=0}^{\infty}$.  If we divide each $\varphi_n$ by its (positive) leading coefficient, then we obtain a new collection of monic orthogonal polynomials $\{\Phi_n\}_{n=0}^{\infty}$.  The Szeg\H{o} Recursion states that there is a sequence of numbers $\{\alpha_n\}_{n=0}^{\infty}$ in the unit disk $\bbD$ such that
\begin{align*}
\Phi_{n+1}(z)&=z\Phi_n(z)-\bar{\alpha}_n\phi_n^*(z)\\
\Phi_{n+1}^*(z)&=\Phi_n^*(z)-\alpha_nz\Phi_n(z),
\end{align*}
where $\Phi_n^*(z)=z^n\overline{\Phi_n(1/\bar{z})}$ are the reversed polynomials (see \cite[Theorem 1.5.2]{OPUC1}).  It is easy to see from these formulas and the fact that $\Phi_n^*(0)=1$ that $-\bar{\alpha}_n=\Phi_{n+1}(0)$.  The sequence $\{\alpha_n\}_{n=0}^{\infty}$ is often called the sequence of Verblunsky coefficients associated to the measure $\mu$, but one can find other names in the literature as well.  The orthonormal polynomials also satisfy a form of the Szeg\H{o} Recursion, as can be seen by the formula
\[
\varphi_n(z)=\frac{\Phi_n(z)}{\prod_{j=0}^{n-1}\sqrt{1-|\alpha_j|^2}}.
\]
We will occasionally write $\varphi_n(z;\mu)$ if we need to emphasize the dependence on the measure $\mu$.

\subsection{Verblunsky's Theorem}\label{verb}

We have just seen that to each measure $\mu$ as in Section \ref{recur}, one can associate a sequence $\{\alpha_n\}_{n=0}^{\infty}$ in the unit disk.  Verblunsky's Theorem states that the converse is also true.  Specifically, it states that to any sequence of complex numbers $\{\alpha_n\}_{n=0}^{\infty}$ in the unit disk, there is a measure on the unit circle whose corresponding monic orthogonal polynomials satisfy $-\bar{\alpha}_n=\Phi_{n+1}(0)$ (see \cite[Theorem 1.7.11]{OPUC1}).  Due to this theorem, it is common to refer to a measure on the unit circle by referring to its corresponding sequence of Verblunsky coefficients.

\subsection{Alexandrov Measures}\label{alex}

Given a measure $\mu$ with corresponding Verblunsky coefficients $\{\alpha_n\}_{n=0}^{\infty}$ and a choice of $\lambda\in\partial\bbD$, one can define a new measure $\mu^{(\lambda)}$ by defining its Verblunsky coefficients via the formula
\[
\alpha_n(\mu^{(\lambda)})=\lambda\alpha_n(\mu).
\]
Thus, the Verblunsky coefficients for $\mu^{(\lambda)}$ are just rotated versions of the Verblunsky coefficients for $\mu$ and so they are still in $\bbD$.  We caution the reader that the measure $\mu^{(\lambda)}$ is not just a rotated version of the measure $\mu$ (see Section \ref{rotate}).  The relationship is more subtle and is easiest to understand by the relationship between the Verblunsky coefficients.

\subsection{The CD Kernel and Christoffel Functions}\label{CD}

In the space $L^2(\mu)$, the orthogonal projection onto $\mbox{span}\{1,z,\ldots,z^n\}$ is given by integration against the kernel
\[
K_n(z,w)=\sum_{j=0}^n\varphi_j(z)\overline{\varphi_j(w)},
\]
which is sometimes called the Christoffel-Darboux kernel.  The Christoffel-Darboux Formula is an expression for this same kernel that does not involve a sum (see \cite[Theorem 2.2.7]{OPUC1}).  Theorem \ref{mixalex} below will present a generalized version of this formula.  We will occasionally write $K_n(z,w;\mu)$ if we need to emphasize the dependence on the measure $\mu$.

Another important use of the Christoffel-Darboux kernel is its relevance to Christoffel functions.  For any $z\in\bbC$ and $n\in\bbN$, define
\[
\lambda_n(\zeta)=\inf\left\{\int|P(z)|^2d\mu(z):P(z)=1,\deg(P)\leq n\right\}.
\]
The infimum defining $\lambda_n$ is in fact a minimum and the unique extremal polynomial is
\[
\frac{K_n(z,\zeta)}{K_n(\zeta,\zeta)}.
\]
From this, it immediately follows that $\lambda_n(\zeta)=K_n(\zeta,\zeta)^{-1}$ (see \cite[Section 2.2]{OPUC1}). As a corollary, we deduce that $\mu(\{\zeta\})>0$ if and only if $\{\varphi_j(\zeta)\}_{j=0}^{\infty}\in\ell^2$ and in fact
\begin{equation}\label{massform}
\mu(\{\zeta\})=\left(\sum_{j=0}^{\infty}|\varphi(\zeta)|^2\right)^{-1}
\end{equation}
(see \cite[Section 2.2]{OPUC1}).

\subsection{Rotated Measures}\label{rotate}

One can easily calculate the orthogonal polynomials and Verblunsky coefficients of a rotated version of the measure $\mu$.  It is easy to see that if $\sigma\in\partial\bbD$, then $\{\Phi_n(\sigma z)\}_{n=0}^{\infty}$ are orthogonal with respect to the measure $\mu$ that has been rotated by $\sigma$ on the unit circle.  However, $\Phi_n(\sigma z)$ is not monic, so to make it monic, we multiply by $\bar{\sigma}^n$.  Then, using the fact that $\alpha_n=-\overline{\Phi_{n+1}(0)}$, we find that the Verblunsky coefficients of the rotated measure are given by
\[
\alpha_n(\sigma\mu)=\bar{\sigma}^{n+1}\alpha_n(\mu).
\]
From this we also calculate
\[
\varphi_n(z;\sigma\mu)=\bar{\sigma}^n\varphi_n(\sigma z;\mu)
\]
from which it follows that
\[
K_n(z,w;\sigma\mu)=K_n(\sigma z,\sigma w;\mu).
\]

\subsection{Paraorthogonal Polynomials}\label{POPUC}

Related to the orthogonal polynomials on the unit circle are the paraorthogonal polynomials on the unit circle (POPUC).  For any $\beta\in\partial\bbD$, we can define the POPUC $\Phi_n(z;\beta)$ by
\[
\Phi_n(z;\beta)=z\Phi_{n-1}(z)-\bar{\beta}\Phi_{n-1}^*(z).
\]
Thus, the POPUC are defined in terms of the OPUC.  By changing $\alpha_{n-1}\in\bbD$ to a parameter $\beta\in\partial\bbD$, the polynomial changes in a way that has a predictable effect on the zeros.  The following facts are true of the zeros of POPUC:
\begin{itemize}
\item The zeros of $\Phi_n(z;\beta)$ are all on the unit circle $\partial\bbD$.
\item The zeros of $\Phi_n(z;\beta)$ are all simple.
\item If $\beta,\tau\in\partial\bbD$ and $\beta\neq\tau$, then the zeros of $\Phi_n(z;\beta)$ interlace on the unit circle.
\end{itemize}
All of these facts can be proven using the observation that $z\Phi_{n-1}(z)/\Phi_{n-1}^*(z)$ is a Blaschke product.

\subsection{Quadrature Measures}\label{quad}

Given a measure $\mu$ on $\partial\bbD$ as above and a natural number $n\in\bbN$, one can find a measure $\nu_n$ supported on $n$ points in $\partial\bbD$ and for which
\[
\int z^jd\mu=\int z^jd\nu_n,\qquad\qquad j=-n,-n+1,\ldots,n-1,n.
\]
The measure $\nu_n$ is called a quadrature measure for $\mu$.  The measure $\nu_n$ is not unique and there is a one-parameter family of such measures that can be constructed explicitly.  Indeed, for any $\beta\in\partial\bbD$, one can consider the zeros of the polynomial $\Phi_{n}(z;\beta)$, which we denote by $\{z_j\}_{j=1}^n$.  We have already seen that $|z_j|=1$ for all $j$.  Then, one defines
\[
\nu_n=\sum_{j=1}^nw_j\delta_{z_j},
\]
where
\begin{equation}\label{weights}
w_j=K_{n-1}(z_j,z_j)^{-1}
\end{equation}
(see \cite[Theorem 2.2.12]{OPUC1}).  From this formula, we deduce the non-trivial corollary that
\[
\sum_{j=1}^nK_{n-1}(z_j,z_j)^{-1}=1.
\]

\subsection{CMV Matrices}\label{cmv}

Given a sequence of Verblunsky coefficients $\{\alpha_n\}_{n=0}^{\infty}$, one can form a sequence of $2\times2$ unitary matrices $\{\Theta_n\}_{n=0}^{\infty}$ as follows:
\[
\Theta_n=\begin{pmatrix}
\bar{\alpha}_n & \sqrt{1-|\alpha_n|^2}\\
\sqrt{1-|\alpha_n|^2} & -\alpha_n
\end{pmatrix}.
\]
One can then form two infinite unitary matrices $\mcl$ and $\mcm$ by the formulas
\[
\mcl=\Theta_0\oplus\Theta_2\oplus\Theta_4\oplus\cdots,\qquad\qquad\mcm=1\oplus\Theta_1\oplus\Theta_3\oplus\cdots,
\]
where the $1$ that starts the formula for $\mcm$ is a $1\times1$ identity matrix.  The CMV matrix corresponding to the sequence $\{\alpha_n\}_{n=0}^{\infty}$ is the unitary matrix $\mcc=\mcl\mcm$.

The matrix $\mcc$ is cyclic with respect to the vector $\vec{e}_1$ and the spectral measure of $\mcc$ with respect to this vector is the measure $\mu$ whose Verblunsky coefficients are $\{\alpha_n\}_{n=0}^{\infty}$ (see \cite[Section 1.7]{OPUC1} and \cite[Section 4.2]{OPUC1}).  Furthermore, the characteristic polynomial of the principle $n\times n$ submatrix of $\mcc$ is $\Phi_n(z;\mu)$.  It is easy to see that if one replaces $\alpha_{n-1}$ by some $\beta\in\partial\bbD$, then the principle $n\times n$ submatrix of $\mcc$ becomes unitary, which is why all the zeros of paraorthogonal polynomials are on the unit circle (they are eigenvalues of a unitary matrix).

\subsection{Sieved Measures}\label{sieve}

If $\mu$ is a measure on $\partial\bbD$, we can think of it as a measure on $[0,2\pi)$ by means of the mapping $e^{i\theta}\rightarrow\theta$.  If $p\in\bbN$, we can $p$-sieve such a measure by placing $p$ copies of the measure on the intervals $[2j\pi/p,2(j+1)\pi/p)$ for $j=0,1,\ldots,p-1$ and each having total measure $1/p$.  If the Verblunsky coefficients for $\mu$ are $\{\alpha_n\}_{n=0}^{\infty}$, then the Verblunsky coefficients $\{\tilde{\alpha}_n\}_{n=0}^{\infty}$ for the sieved measure are
\[
\tilde{\alpha}_{kp+m}=\begin{cases}
\alpha_k \qquad \mbox{if}\qquad &m=p-1\\
0 & \mbox{otherwise}
\end{cases}
\]
(see \cite[Section 1.6]{OPUC1}).  From this, it follows that if $\{\Phi_n\}_{n=0}^{\infty}$ are the monic orthogonal polynomials for $\mu$ and $\{\tilde{\Phi}_n\}_{n=0}^{\infty}$ are the monic orthogonal polynomials for the sieved measure, then
\begin{equation}\label{sivpoly}
\tilde{\Phi}_{kp+m}(z)=z^m\Phi_k(z^p)
\end{equation}
with a similar relation for the orthonormal polynomials.

\section{A Mixed CD Formula}\label{newmix}

Let us retain the notation for the orthonormal polynomials $\{\varphi_n\}_{n=0}^{\infty}$ and all the orthonormal polynomials in the Alexandrov family $\{\varphi_n^{(\lambda)}\}_{n=0}^{\infty}$, where $|\lambda|=1$.  Or first result is a generalized version of the Mixed CD formula.

\begin{theorem}\label{mixalex}
If $z\bar{w}\neq1$, then
\begin{align*}
\sum_{j=0}^n\overline{\varphi_j(w)}\varphi_j^{(\lambda)}(z)&=\frac{1-\bar{\lambda}+\bar{\lambda}\overline{\varphi_{n+1}^*(w)}\varphi_{n+1}^{(\lambda)*}(z)-\overline{\varphi_{n+1}(w)}\varphi_{n+1}^{(\lambda)}(z)}{1-z\bar{w}}\\
&=\frac{1-\bar{\lambda}-z\bar{w}\overline{\varphi_n(w)}\varphi_n^{(\lambda)}(z)+\bar{\lambda}\overline{\varphi_n^*(w)}\varphi_n^{(\lambda)*}(z)}{1-z\bar{w}}
\end{align*}
\end{theorem}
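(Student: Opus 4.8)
The plan is to establish the first equality by induction on $n$ via the orthonormal Szeg\H{o} recursion, and then to deduce the second equality by a single application of the same recursion. The key observation is that the Alexandrov family obeys exactly the recursion satisfied by $\{\varphi_n\}$ with each $\alpha_n$ replaced by $\lambda\alpha_n$; since $|\lambda|=1$, the normalizing constant $\sqrt{1-|\alpha_n|^2}$ is unaffected, so that
\[
\varphi_{n+1}^{(\lambda)}(z)=\frac{z\varphi_n^{(\lambda)}(z)-\bar\lambda\bar\alpha_n\varphi_n^{(\lambda)*}(z)}{\sqrt{1-|\alpha_n|^2}},\qquad\varphi_{n+1}^{(\lambda)*}(z)=\frac{\varphi_n^{(\lambda)*}(z)-\lambda\alpha_n z\varphi_n^{(\lambda)}(z)}{\sqrt{1-|\alpha_n|^2}},
\]
together with the corresponding formulas for $\varphi_{n+1}$ and $\varphi_{n+1}^*$. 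Because $z\bar w\neq1$, it is enough to prove the identities after clearing the common denominator $1-z\bar w$, that is, to work with the numerators.

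For the base case $n=0$, both $\mu$ and $\mu^{(\lambda)}$ are probability measures, so $\varphi_0\equiv\varphi_0^{(\lambda)}\equiv1$ and the left side equals $1$; substituting the $n=0$ recursion into the first numerator and simplifying (using $\lambda\bar\lambda=1$) yields exactly $1-z\bar w$, as needed. For the inductive step, write $S_n$ for the sum on the left and $N_n$ for the first numerator; since $S_n=S_{n-1}+\overline{\varphi_n(w)}\varphi_n^{(\lambda)}(z)$, the claim reduces to the telescoping identity $N_n-N_{n-1}=(1-z\bar w)\overline{\varphi_n(w)}\varphi_n^{(\lambda)}(z)$. I would verify this by inserting the four recursion formulas into $N_n$ and expressing everything in terms of the four products $\overline{\varphi_n(w)}\varphi_n^{(\lambda)}(z)$, $\overline{\varphi_n(w)}\varphi_n^{(\lambda)*}(z)$, $\overline{\varphi_n^*(w)}\varphi_n^{(\lambda)}(z)$, and $\overline{\varphi_n^*(w)}\varphi_n^{(\lambda)*}(z)$. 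The two mixed products cancel identically, and the coefficient of $\overline{\varphi_n^*(w)}\varphi_n^{(\lambda)*}(z)$ is $\bar\lambda\bigl(1-(1-|\alpha_n|^2)-|\alpha_n|^2\bigr)=0$, leaving precisely $(1-z\bar w)\overline{\varphi_n(w)}\varphi_n^{(\lambda)}(z)$.

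The second equality requires no induction: applying the same four substitutions once to the first numerator $N_n$, the term $1-\bar\lambda$ is untouched while the remaining terms collapse, by the identical cancellations, to $\bar\lambda\overline{\varphi_n^*(w)}\varphi_n^{(\lambda)*}(z)-z\bar w\,\overline{\varphi_n(w)}\varphi_n^{(\lambda)}(z)$, which is the second numerator. The main obstacle is nothing conceptual but rather the bookkeeping in these expansions: one must carry eight bilinear terms and confirm that every cross term cancels, each cancellation resting on the two elementary facts $|\lambda|=1$ and $|\alpha_n|^2+(1-|\alpha_n|^2)=1$. No analytic input beyond the Szeg\H{o} recursion and the definition of the Alexandrov family is needed.
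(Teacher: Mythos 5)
Your proof is correct and follows essentially the same route as the paper: the paper's key observation is precisely the transfer-matrix form of the Szeg\H{o} recursion for $\bigl(\bar\lambda\varphi_n^{(\lambda)*},\varphi_n^{(\lambda)}\bigr)$ that your four scalar substitutions encode, after which it defers to the standard telescoping induction from Simon's proof of the mixed CD formula, which you carry out explicitly. The only quibble is that your parenthetical computation of the coefficient of $\overline{\varphi_n^*(w)}\varphi_n^{(\lambda)*}(z)$ is stated a bit loosely (the clean statement is that this coefficient equals $\bar\lambda$ in both $N_n$ and $N_{n-1}$, hence cancels in the difference), but the cancellations you assert are all correct.
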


Theorem \ref{mixalex} in the case $\lambda=1$ is the usual CD formula \cite[Theorem 2.2.7]{OPUC1} and Theorem \ref{mixalex} when $\lambda=-1$ is what is usually called the Mixed CD formula \cite[Theorem 3.2.3]{OPUC1}.  We will omit the details of the proof because the proof proceeds exactly as the proof of \cite[Theorem 3.2.3]{OPUC1} once we notice that
\[
\label{matrec}
\begin{pmatrix}
\bar{\lambda}\varphi_{n+1}^{(\lambda)*}(z)\\
\varphi_{n+1}^{(\lambda)}(z)
\end{pmatrix}=
M_n(z)
\begin{pmatrix}
\bar{\lambda}\varphi_{n}^{(\lambda)*}(z)\\
\varphi_{n}^{(\lambda)}(z)
\end{pmatrix}
\]
where
\[
M_n(z)=\frac{1}{\rho_n}\begin{pmatrix}
1 & -\alpha_nz\\
-\bar{\alpha}_n & z
\end{pmatrix}
.
\]

\section{Applications}\label{app}

Now we are ready to discuss some applications of Theorem \ref{mixalex}.

\subsection{Mutually Unbiased Bases}\label{mubapp}

 Suppose $|\beta|=1$ and $\Phi_{n+1}(z_k;\beta)=0$ for $k=1,\ldots,n+1$.  Then $z_k\varphi_n(z_k)=\bar{\beta}\varphi_n^*(z_k)$.  Similarly, if $|\lambda|=1$ and $\Phi_{n+1}^{(\lambda)}(w_k;\lambda\beta)=0$ for $k=1,\ldots,n+1$, then $w_k\varphi_n^{(\lambda)}(w_k)=\bar{\lambda}\bar{\beta}\varphi_n^{(\lambda)*}(w_k)$.  If we make these substitutions in Theorem \ref{mixalex}, then we find
\begin{equation}\label{dotp}
\sum_{j=0}^n\overline{\varphi_j(z_k)}\varphi_j^{(\lambda)}(w_m)=\frac{1-\bar{\lambda}}{1-\bar{z}_kw_m}.
\end{equation}
Notice that this implies $z_k\neq w_m$ (when $\lambda\neq1$), for otherwise the right-hand side of \eqref{dotp} would be infinitte while the left-hand side would not.  For ease of notation, let us define the vector
\[
\vec{\varphi}(z)=(\varphi_0(z),\varphi_1(z),\ldots,\varphi_n(z))^T,
\]
and similarly define $\vec{\varphi}^{(\lambda)}(z)$.  Equation \eqref{dotp} with $\lambda=1$ implies
\[
\left\{\vec{\varphi}(z_k)\right\}_{k=1}^{n+1}
\]
is an orthogonal basis for $\bbC^{n+1}$.  Hence, the same is true for
\[
\left\{\vec{\varphi}^{(\lambda)}(w_k)\right\}_{k=1}^{n+1}.
\]
Notice that these are not orthonormal bases and the required normalization constants are precisely the weights on the associated quadrature measures from Section \ref{quad}.  Define
\[
\vec{\eta}(z)=\frac{\vec{\varphi}(z)}{\sqrt{K_n(z,z)}},\qquad\qquad \vec{\eta}^{(\lambda)}(z)=\frac{\vec{\varphi}^{(\lambda)}(z)}{\sqrt{K_n^{(\lambda)}(z,z)}}.
\]

\begin{theorem}\label{nomub}
The orthonormal bases $\{\vec{\eta}(z_k)\}$ and $\{\vec{\eta}^{(\lambda)}(w_k)\}$ are not mutually unbiased.
\end{theorem}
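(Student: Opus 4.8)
The plan is to convert mutual unbiasedness into a rigid multiplicative-separation property of the pairwise distances between the two node sets $\{z_k\}$ and $\{w_m\}$, and then to defeat that property with an Apollonius-circle argument. First I would record the inner products explicitly: combining \eqref{dotp} with the definitions of $\vec\eta$ and $\vec\eta^{(\lambda)}$, and using the identity $|1-\bar z_k w_m|=|z_k-w_m|$ valid for $z_k,w_m\in\partial\bbD$, one gets
\[
\left|\langle\vec\eta(z_k),\vec\eta^{(\lambda)}(w_m)\rangle\right|^2=\frac{|1-\bar\lambda|^2}{|z_k-w_m|^2\,K_n(z_k,z_k)\,K_n^{(\lambda)}(w_m,w_m)}.
\]
If $\lambda=1$ the two families coincide and a basis is never mutually unbiased with itself, so I may assume $\lambda\neq1$; the remark following \eqref{dotp} then gives $z_k\neq w_m$, so every quantity above is a positive real number.

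Next I would extract the structural consequence. Were the two bases mutually unbiased in $\bbC^{n+1}$, each displayed quantity would equal $1/(n+1)$, which rearranges to
\[
|z_k-w_m|^2=a_k^2\,b_m^2,\qquad a_k^2=\frac{1}{K_n(z_k,z_k)},\qquad b_m^2=\frac{(n+1)\,|1-\bar\lambda|^2}{K_n^{(\lambda)}(w_m,w_m)},
\]
for all $k,m$, with $a_k,b_m>0$. In other words, the matrix of pairwise distances $[\,|z_k-w_m|\,]_{k,m}$ would have to factor as the rank-one outer product $a_k b_m$.

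The contradiction I would then derive is geometric. Fix two distinct nodes $z_{k_1}\neq z_{k_2}$ and divide the two factorizations to get $|w_m-z_{k_1}|/|w_m-z_{k_2}|=a_{k_1}/a_{k_2}=:\rho$, the same constant for every $m$. Hence all of $w_1,\dots,w_{n+1}$ lie on the Apollonius locus $\{w:|w-z_{k_1}|=\rho\,|w-z_{k_2}|\}$, which is a circle when $\rho\neq1$ and a line when $\rho=1$. Since $z_{k_1},z_{k_2}\in\partial\bbD$, while an Apollonius circle with $\rho\neq1$ passes through neither of its two foci, this locus is never $\partial\bbD$ itself, so it meets $\partial\bbD$ in at most two points. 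As the $w_m$ are $n+1$ distinct points of $\partial\bbD$ (the zeros of a paraorthogonal polynomial are simple), this is impossible once $n+1\geq3$, which settles the theorem in those dimensions.

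The step I expect to be the main obstacle is the lowest-dimensional case $n+1=2$. There the separation condition collapses to a single scalar equation and the Apollonius locus need only pass through the two points $w_1,w_2$, so the geometric rigidity evaporates. I would attack $\bbC^2$ by direct substitution, using $\varphi_1(z)=(z-\bar\alpha_0)/\sqrt{1-|\alpha_0|^2}$, the explicit zeros of $\Phi_2(\,\cdot\,;\beta)$ and $\Phi_2^{(\lambda)}(\,\cdot\,;\lambda\beta)$, and the Christoffel weights \eqref{weights}, and then deciding whether the one surviving equation can be met. This case is genuinely delicate, and is precisely where I expect the real work to lie (and where one may need an additional nondegeneracy hypothesis on $\mu$, since the higher-dimensional obstruction is simply not present).
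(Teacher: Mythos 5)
Your proposal is correct and reaches the paper's key intermediate identity by a shorter route, then finishes the argument differently. The paper also reduces mutual unbiasedness to the statement that the ratio $|z_q-w_m|/|z_p-w_m|$ is independent of $m$ (its equations \eqref{K3} and \eqref{K4}), but it gets there by first solving for the individual Christoffel weights using the quadrature identity $\sum_k K_n(z_k,z_k)^{-1}=\sum_m K_n^{(\lambda)}(w_m,w_m)^{-1}=1$; your observation that the distance matrix must factor as a rank-one product $a_kb_m$ yields the same conclusion by simply dividing two instances of the unbiasedness equation, with no quadrature input. From there the two arguments diverge: the paper expands $|z_q-w_m|^2|z_p-w_k|^2=|z_q-w_k|^2|z_p-w_m|^2$ into real and imaginary parts as in \eqref{reim}, rotates the measure so that $z_q=\bar z_p$, and reduces to the statement that $(\sin\phi_k-\sin\phi_m)/\sin(\phi_k-\phi_m)$ is independent of $k$, concluding from injectivity of $x\mapsto(\sin x-\sin y)/\sin(x-y)$ that all the $w_j$ coincide, contradicting the simplicity of the POPUC zeros. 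Your Apollonius-circle finish is more transparent and avoids the trigonometry entirely: the locus $|w-z_{k_1}|=\rho\,|w-z_{k_2}|$ is a circle or line not passing through its foci, hence is distinct from $\partial\bbD$ (which does contain $z_{k_1}$ and $z_{k_2}$), hence meets $\partial\bbD$ in at most two points, against the $n+1$ distinct zeros $w_m$.

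The caveat you raise about $n+1=2$ deserves emphasis, but it is not a defect peculiar to your argument. The paper's final step compares $(\sin\phi_k-\sin\phi_m)/\sin(\phi_k-\phi_m)$ with $(\sin\phi_j-\sin\phi_m)/\sin(\phi_j-\phi_m)$ for $j\neq k$, both distinct from $m$, which requires three indices; with $n+1=2$ the only nontrivial instance of \eqref{K4} is a single scalar equation and no contradiction is extracted there either. So both proofs really establish the theorem for $n+1\geq 3$, and the direct computation you propose for $\bbC^2$ would, if carried out, go beyond what the paper's own argument delivers.
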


\begin{proof}
By Theorem \ref{mixalex}, if the given bases were mutually unbiased, then we would have
\[
\frac{1}{K_n(z_k,z_k)K_n^{(\lambda)}(w_m,w_m)}\cdot\frac{|1-\bar{\lambda}|^2}{|z_k-w_m|^2}=\frac{1}{n+1}
\]
for all $k,m=1,2,\ldots,n+1$ (we used the fact that $|z_k|=|w_m|=1$).  This is a collection of $(n+1)^2$ equations.  Let us consider all those equations in which $k=1$ and use the fact that
\[
\sum_{k=1}^{n+1}\frac{1}{K_n(z_k,z_k)}=\sum_{m=1}^{n+1}\frac{1}{K_n^{(\lambda)}(w_m,w_m)}=1
\]
(see Section \ref{quad}). This subset of equations implies
\[
\frac{1}{K_n(z_1,z_1)}=\frac{\sum_{j=1}^{n+1}|z_1-w_j|^2}{(n+1)|1-\bar{\lambda}|^2},\qquad\qquad\frac{1}{K_n^{(\lambda)}(w_m,w_m)}=\frac{|z_1-w_m|^2}{\sum_{j=1}^{n+1}|z_1-w_j|^2}.
\]
This means
\begin{equation}\label{K1}
\frac{1}{K_n^{(\lambda)}(w_m,w_m)}=\frac{|z_1-w_m|^2K_n(z_1,z_1)}{(n+1)|1-\bar{\lambda}|^2}.
\end{equation}
We can repeat this same analysis, but with $k=p$ for any other value $p\in\{2,\ldots,n+1\}$.  This gives us
\begin{equation}\label{K2}
\frac{1}{K_n^{(\lambda)}(w_m,w_m)}=\frac{|z_p-w_m|^2K_n(z_p,z_p)}{(n+1)|1-\bar{\lambda}|^2}
\end{equation}
for all $m=1,\ldots,n+1$.  Comparing \eqref{K1} nad \eqref{K2} shows
\begin{equation}\label{K3}
\frac{|z_1-w_m|^2}{|z_p-w_m|^2}=\frac{K_n(z_1,z_1)}{K_n(z_p,z_p)}
\end{equation}
and this relation is true for all $m=1,2,\ldots,n+1$.  We conclude that
\begin{equation}\label{K4}
\frac{|z_q-w_m|^2}{|z_p-w_m|^2}=\frac{|z_q-w_k|^2}{|z_p-w_k|^2}
\end{equation}
for all $k,m,p,q\in\{1,\ldots,n+1\}$.

We can rewrite this as
\[
|z_q-w_m|^2|z_p-w_k|^2=|z_q-w_k|^2|z_p-w_m|^2.
\]
This expression simplifies to
\begin{equation}\label{reim}
\Real[(z_q-z_p)(\bar{w}_k-\bar{w}_m)]=\Imag[z_p\bar{z}_q]\Imag[\bar{w}_kw_m]
\end{equation}
By rotating the measure if necessary (see Section \ref{rotate}), we may assume without loss of generality that $z_q=\bar{z}_p$.  With this simplification, we may rewrite \eqref{reim} as
\[
\frac{\Imag[w_k-w_m]}{\Imag[w_k\bar{w}_m]}=\frac{\Imag[z_p^2]}{2\Imag[z_p]}
\]
and this must hold for all $k,m,p\in\{1,\ldots,n+1\}$.  Thus
\[
\frac{\Imag[w_k-w_m]}{\Imag[w_k\bar{w}_m]}=\frac{\Imag[w_j-w_m]}{\Imag[w_j\bar{w}_m]}
\]
for all $k,m,j\in\{1,\ldots,n+1\}$.  If we let $w_t=e^{i\phi_t}$, then this reduces to
\[
\frac{\sin(\phi_k)-\sin(\phi_m)}{\sin(\phi_k-\phi_m)}=\frac{\sin(\phi_j)-\sin(\phi_m)}{\sin(\phi_j-\phi_m)}.
\]

Notice that for each $y\in[0,2\pi]$, 
\[
\frac{\sin(x)-\sin(y)}{\sin(x-y)}
\]
is an injective function of $x$.  Therefore, $\phi_j=\phi_k$ for all $j,k\in\{1,\ldots,n+1\}$ and hence $w_j=w_k$.  This gives us the desired contradiction to the assumption that the bases are mutually unbiased.
\end{proof}

We can appeal to additional results in the theory of OPUC to see why we might expect Theorem \ref{nomub} to be true in many cases.  Suppose we write
\[
d\mu(t)=\nu(t)\frac{dt}{2\pi}+d\mu_s(t),
\]
where we have interpreted $\mu$ as a measure on $[0,2\pi]$ and $\mu_s$ is singular with respect to Lebesgue measure (and suppose we have a similar decomposition for each $\mu^{(\lambda)}$).  Then the main result of \cite{MNT91} shows that under some relatively weak assumptions on the measure $\mu$, it holds that
\[
\lim_{\nri}\frac{n}{K_n(e^{it},e^{it})}=\nu(t)
\]
for Lebesgue almost every $t\in[0,2\pi]$ (we used Section \ref{CD} here).  Thus, for large $n$, we would expect
\[
\vec{\eta}(z_k)\cdot\vec{\eta}^{(\lambda)}(w_m)\approx\frac{(1-\bar{\lambda})\sqrt{\nu(z_k)\nu^{(\lambda)}(w_m)}}{n(1-\bar{z}_kw_m)}.
\]
If $\mu$ is such that $\nu$ and $\nu^{(\lambda)}$ are both bounded above, then we see that $|\vec{\eta}(z_k)\cdot\vec{\eta}^{(\lambda)}(w_m)|$ will be biased in favor of those pairs $(z_k,w_m)$ that are close together.

The proof in \cite{BBRV14} of the existence of mutually unbiased bases in prime power dimension relies on finding large collections of mutually commuting unitary matrices.  In light of Theorem \ref{nomub}, the following result should not be surprising.

\begin{theorem}\label{cmvcomm}
Suppose $\mcc_1$ and $\mcc_2$ are two CMV matrices.  If $\mcc_1\mcc_2=\mcc_2\mcc_1$, then $\mcc_1=\mcc_2$.
\end{theorem}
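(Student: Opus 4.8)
The plan is to prove the stronger statement that $\mcc_1=\mcc_2$ by exploiting the fact that a CMV matrix is not an arbitrary unitary but a \emph{cyclic} one: commuting with such an operator forces membership in the maximal abelian algebra it generates, and the rigidity of the CMV normalization then collapses all remaining freedom. Write $\{\alpha_n\}$ and $\{\beta_n\}$ for the Verblunsky coefficients of $\mcc_1$ and $\mcc_2$, with associated measures $\mu_1,\mu_2$, and set $\rho_0^{\alpha}=\sqrt{1-|\alpha_0|^2}$ and $\rho_0^{\beta}=\sqrt{1-|\beta_0|^2}$. The first step is to reduce $\mcc_2$ to a multiplication operator. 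Since $\mcc_1$ is unitary we have $\mcc_1^*=\mcc_1^{-1}$, so $\mcc_1\mcc_2=\mcc_2\mcc_1$ also yields $\mcc_1^*\mcc_2=\mcc_2\mcc_1^*$; hence $\mcc_2$ commutes with the whole von Neumann algebra generated by $\mcc_1$. By Section \ref{cmv}, $\vec e_1$ is a cyclic vector for $\mcc_1$ with spectral measure $\mu_1$, so there is a unitary $U_1$ with $U_1\vec e_1=1$ and $U_1\mcc_1U_1^*=M_z$ (multiplication by $z$) on $L^2(\mu_1)$. Because a normal operator with a cyclic vector has maximal abelian commutant, $U_1\mcc_2U_1^*=M_f$ for some $f\in L^\infty(\mu_1)$, and unitarity of $\mcc_2$ forces $|f|=1$ $\mu_1$-a.e.

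Next I would identify $f$ explicitly and observe that it is affine in $z$. From the factorization $\mcc=\mcl\mcm$ in Section \ref{cmv} one reads off $\mcc_1\vec e_1=\bar\alpha_0\vec e_1+\rho_0^{\alpha}\vec e_2$ and likewise $\mcc_2\vec e_1=\bar\beta_0\vec e_1+\rho_0^{\beta}\vec e_2$ (only the first block acts, since $\mcm\vec e_1=\vec e_1$). Solving the first relation for $\vec e_2$ and applying $U_1$ gives $U_1\vec e_2=(z-\bar\alpha_0)/\rho_0^{\alpha}=\varphi_1(z;\mu_1)$. Applying $U_1$ to $\mcc_2\vec e_1=f(\mcc_1)\vec e_1$ then produces
\[
f(z)=\bar\beta_0+\rho_0^{\beta}\,\frac{z-\bar\alpha_0}{\rho_0^{\alpha}}=\frac{\rho_0^{\beta}}{\rho_0^{\alpha}}\,z+\Big(\bar\beta_0-\frac{\rho_0^{\beta}}{\rho_0^{\alpha}}\bar\alpha_0\Big),
\]
an affine map $az+b$ with $a=\rho_0^{\beta}/\rho_0^{\alpha}>0$.

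The last step is to let unimodularity pin down $f(z)=z$. For $z\in\partial\bbD$ one computes $|az+b|^2=|a|^2+|b|^2+2\Real(a\bar b z)$. Since $\supp\mu_1$ is infinite and $|f|=1$ there, this relation holds at infinitely many points of $\partial\bbD$; the nonconstant term $\Real(a\bar b z)$ can be constant on more than two points of the circle only if $a\bar b=0$, so $b=0$ (as $a>0$) and then $a=1$. Hence $f(z)=z$ in $L^\infty(\mu_1)$, giving $\alpha_0=\beta_0$ along the way, and $M_f=M_z$ forces $\mcc_2=U_1^*M_zU_1=\mcc_1$.

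The step requiring the most care is the reduction in the first paragraph: the passage from ``$\mcc_2$ commutes with $\mcc_1$'' to ``$\mcc_2$ is a bounded function of $\mcc_1$'' rests on $\mcc_1$ being normal with a genuinely cyclic vector, so that its commutant is maximal abelian. This is precisely where the hypothesis that $\mcc_1$ is a CMV matrix, and not merely some unitary, is used. After that everything is forced: the CMV normalization ($\rho_0^{\beta}>0$ and the fact that $U_1\vec e_2$ is a degree-one \emph{polynomial} in $z$) is what makes $f$ affine rather than an arbitrary unimodular function, and the standing assumption that $\supp\mu_1$ is infinite is exactly what eliminates every affine unimodular map except the identity.
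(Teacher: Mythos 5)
Your proof is correct, and it takes a genuinely different route from the paper's. The paper argues by brute force on the matrix entries: it equates individual entries of $\mcc_1\mcc_2$ and $\mcc_2\mcc_1$ to extract relations such as $\alpha_1=a_1$ and $\bar{\alpha}_{j+1}a_{j-1}=\bar{a}_{j+1}\alpha_{j-1}$, handles the case of vanishing Verblunsky coefficients separately, and finishes by induction. You instead pass to the spectral representation: since $\vec{e}_1$ is $\star$-cyclic for the unitary $\mcc_1$, the commutant of the von Neumann algebra it generates is the maximal abelian algebra $L^{\infty}(\mu_1)$, so $\mcc_2$ is multiplication by a unimodular symbol $f$; the first column of the $\mcl\mcm$ factorization shows $f$ is affine with positive leading coefficient, and an affine map can be unimodular on an infinite subset of $\partial\bbD$ only if it is the identity. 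Each step checks out, including the identification $U_1\vec{e}_2=(z-\bar{\alpha}_0)/\rho_0^{\alpha}$ and the observation that a set of full $\mu_1$-measure is infinite because $\supp\mu_1$ is. It is worth noting that the paper's own remark following its proof speculates that an infinite-dimensional substitute for simultaneous diagonalization might yield a shorter argument; your proof essentially realizes that suggestion, with the maximal-abelian-commutant form of the spectral theorem playing the role they were looking for. The trade-off is clear: the paper's computation is entirely elementary and self-contained, while yours is shorter, avoids the case analysis for zero coefficients, and explains \emph{why} the result holds (commuting with a cyclic unitary forces membership in its generated algebra, and the CMV normalization leaves no room there), at the cost of importing a standard but nontrivial fact from operator theory.
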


\begin{proof}
Let $\{\alpha_n\}_{n=0}^{\infty}$ be the Verblunsky coefficients that determine the entries of $\mcc_1$ and let $\{a_n\}_{n=0}^{\infty}$ be the Verblunsky coefficients that determine the entries of $\mcc_2$.  Let us start by making the assumption that all $\{\alpha_n\}_{n=0}^{\infty}$ are non-zero.  By equating the diagonal entries of $\mcc_1\mcc_2$ and $\mcc_2\mcc_1$, one quickly sees that $\alpha_1=a_1$ and $\bar{\alpha}_{j+1}a_{j-1}=\bar{a}_{j+1}\alpha_{j-1}$ for all $j\in\bbN$.  This immediately implies $\alpha_k=a_k$ for all odd $k$, and we will complete the proof by showing that $\alpha_0=a_0$.

To see this, start by equating the $(3,1)$-entries of $\mcc_1\mcc_2$ and $\mcc_2\mcc_1$ to see that $\alpha_2r_0=a_2\rho_0$, where $\rho_j=\sqrt{1-|\alpha_j|^2}$ and $r_j=\sqrt{1-|a_j|^2}$.  Now if we equate the $(2,1)$-entries of these matrices and use the fact that $\alpha_1=a_1$, then we see
\[
\bar{\alpha}_1(\alpha_0r_0-a_0\rho_0)=\rho_0\bar{a}_0-r_0\bar{\alpha}_0.
\]
If we can divide both sides by $\alpha_0r_0-a_0\rho_0$, then we get $|\alpha_1|=1$, which is a contradiction.  It follows that $\alpha_0r_0=a_0\rho_0$.  Now we can combine the equalities $\alpha_0r_0=a_0\rho_0$ and $a_2\rho_0=\alpha_2r_0$ to obtain $a_0\alpha_2=a_2\alpha_0$ and multiply this by the equality $\bar{a}_0\alpha_2=\bar{\alpha}_0a_2$ to obtain $\alpha_2^2|a_0|^2=a_2^2|\alpha_0|^2$.  This leads to
\[
\alpha_2^2r_0^2-a_2^2=a_2^2\rho_0^2-a_2^2.
\]
Now we use the fact that $\alpha_2r_0=a_2\rho_0$ to obtain $\alpha_2^2=a_2^2$.  This implies $\alpha_2=a_2$ since $\alpha_2r_0=a_2\rho_0$.  It follows that $\alpha_0=a_0$ as desired.

Now we need to consider the possibility that some of the Verblunsky coefficients are $0$.  In this setting, we can still apply all of the above reasoning to conclude that $\alpha_1=a_1$, $\alpha_2=a_2$, and $\bar{\alpha}_{j+1}a_{j-1}=\bar{a}_{j+1}\alpha_{j-1}$ for all $j\in\bbN$.  If $a_0\neq\alpha_0$, then it must be that $\alpha_2=a_2=0$.  By comparing the $(1,3)$-entries of the matrices $\mcc_1\mcc_2$ and $\mcc_2\mcc_1$, using the fact that $a_1=\alpha_1$ and the fact that $\alpha_0r_0=a_0\rho_0$ (which is still true), then we conclude that $\alpha_1a_0=\alpha_1\alpha_0$.  Thus, either $\alpha_0=a_0$ or $\alpha_1=0$.  In the latter case, by comparing the $(4,3)$-entries of $\mcc_1\mcc_2$ and $\mcc_2\mcc_1$ we also conclude $a_0=\alpha_0$.

We finish the proof by induction.  Suppose $\alpha_j=a_j$ for $j=0,\ldots,n$ and for contradiction that $\alpha_{n+1}\neq a_{n+1}$.  Then since $\bar{\alpha}_{n+1}a_{n-1}=\bar{a}_{n+1}\alpha_{n-1}$, we conclude that $a_{n-1}=\alpha_{n-1}=0$.  By comparing the $(n+2,n)$ or $(n,n+2)$-entries of $\mcc_1\mcc_2$ and $\mcc_2\mcc_1$, we find that
\[
\bar{\alpha}_{n+1}a_{n-2}=\bar{a}_{n+1}\alpha_{n-2}
\]
and hence $a_{n-2}=\alpha_{n-2}=0$.  Then by comparing the $(n+1,n)$ or $(n,n+1)$-entries of $\mcc_1\mcc_2$ and $\mcc_2\mcc_1$, we find that $\alpha_{n+1}=a_{n+1}$, which gives the desired contradiction.
\end{proof}

In the case of finite matrices, when two unitary matrices commute, they can be simultaneously diagonalized.  An analogous result for infinite dimensional unitary operators could potentially yield a shorter proof of Theorem \ref{cmvcomm}.  However, it is not clear that such an analog exists, though one might try to reason as in \cite[Exercise VII.4]{RS1} to find such a result.

\subsection{Balanced States}\label{balapp}

Suppose that $\mu$ is such that all the Verblunsky coefficients are real numbers.  Then $\varphi_j(1)\in\bbR$ for all $j\in\bbN$.  If $|\lambda|=1$ and $\lambda\not\in\bbR$, then all the Verblunsky coefficients of $\mu^{(\lambda)}$ are the complex conjugates of the Verblunsky coefficients of $\mu^{(\bar{\lambda})}$.  Therefore,
\[
\overline{\varphi_j^{(\lambda)}(z)}=\varphi_j^{(\bar{\lambda})}(\bar{z}).
\]
In particular, the roots of the POPUC $\Phi_{n+1}^{(\lambda)}(z;\lambda)$ are the complex conjugates of the roots of $\Phi_{n+1}^{(\bar{\lambda})}(z;\bar{\lambda})$.

Let us denote the roots of $\Phi_{n+1}^{(\lambda)}(z;\lambda)$ by $\{w_j\}_{j=1}^{n+1}$.  Then by Theorem \ref{mixalex} we calculate
\begin{align*}
\vec{\eta}(1)\cdot\vec{\eta}^{(\lambda)}(w_m)&=\frac{1}{\sqrt{K_n(1,1)K_n^{(\lambda)}(w_m,w_m)}}\sum_{j=0}^n\varphi_j(1)\varphi_j^{(\lambda)}(w_m)\\
&=\frac{(1-\bar{\lambda})}{(1-w_m)\sqrt{K_n(1,1)K_n^{(\lambda)}(w_m,w_m)}}
\end{align*}
We can apply a similar calculation to find a formula for $\vec{\eta}(1)\cdot\vec{\eta}^{(\bar{\lambda})}(\bar{w}_m)$ and deduce the following result.

\begin{theorem}\label{balance}
Suppose $\mu$ is such that all its Verblunsky coefficients are real and $\lambda\in\partial\bbD\setminus\{-1,1\}$.  Let $\{w_m\}_{m=1}^{n+1}$ be the zeros of $\Phi_{n+1}^{(\lambda)}(z;\lambda)$.  Then the state $\vec{\eta}(1)$ is balanced with respect to the orthonormal bases
\[
\left\{\vec{\eta}^{(\lambda)}(w_m)\right\}_{m=1}^{n+1},\qquad\mbox{and}\qquad \left\{\vec{\eta}^{(\bar{\lambda})}(\bar{w}_m)\right\}_{m=1}^{n+1}.
\]
In other words, the sets
\[
\left\{|\vec{\eta}(1)\cdot\vec{\eta}^{(\lambda)}(w_m)|\right\}_{m=1}^{n+1},\qquad\mbox{and}\qquad \left\{|\vec{\eta}(1)\cdot\vec{\eta}^{(\bar{\lambda})}(\bar{w}_m)|\right\}_{m=1}^{n+1}
\]
are the same.
\end{theorem}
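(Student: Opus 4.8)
The plan is to prove the stronger term-by-term identity
\[
|\vec{\eta}(1)\cdot\vec{\eta}^{(\lambda)}(w_m)| = |\vec{\eta}(1)\cdot\vec{\eta}^{(\bar{\lambda})}(\bar{w}_m)|, \qquad m=1,\ldots,n+1,
\]
from which the equality of the two sets is immediate. As a preliminary, I would observe that both collections are genuinely orthonormal bases: applying the construction of Section \ref{mubapp} with $\mu^{(\lambda)}$ in the role of the base measure and the paraorthogonal parameter equal to $\lambda$ shows that $\{\vec{\varphi}^{(\lambda)}(w_m)\}$ is orthogonal (hence $\{\vec{\eta}^{(\lambda)}(w_m)\}$ is orthonormal), and the same reasoning applied to $\mu^{(\bar{\lambda})}$ handles the conjugate family, using that $\{\bar{w}_m\}$ are exactly the zeros of $\Phi_{n+1}^{(\bar{\lambda})}(z;\bar{\lambda})$.

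The computational heart is to record the two inner products side by side. The first is already in hand from the display preceding the statement,
\[
\vec{\eta}(1)\cdot\vec{\eta}^{(\lambda)}(w_m) = \frac{1-\bar{\lambda}}{(1-w_m)\sqrt{K_n(1,1)K_n^{(\lambda)}(w_m,w_m)}}.
\]
For the second I would run the identical calculation with $\lambda$ replaced by $\bar{\lambda}$ and the evaluation point $w_m$ replaced by $\bar{w}_m$; since $\bar{w}_m$ is a zero of $\Phi_{n+1}^{(\bar{\lambda})}(z;\bar{\lambda})$, Theorem \ref{mixalex} applies verbatim and gives
\[
\vec{\eta}(1)\cdot\vec{\eta}^{(\bar{\lambda})}(\bar{w}_m) = \frac{1-\lambda}{(1-\bar{w}_m)\sqrt{K_n(1,1)K_n^{(\bar{\lambda})}(\bar{w}_m,\bar{w}_m)}}.
\]

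The final step is to compare these expressions factor by factor after passing to absolute values. The numerators agree because $|1-\bar{\lambda}|=|1-\lambda|$, and the factors $|1-w_m|=|1-\bar{w}_m|$ agree because each pair is a conjugate pair; the common factor $K_n(1,1)$ cancels out. The only place the symmetry hypothesis genuinely enters is the equality of the Christoffel-Darboux normalizations, where I would invoke the conjugation identity $\overline{\varphi_j^{(\lambda)}(z)}=\varphi_j^{(\bar{\lambda})}(\bar{z})$ established just above the theorem to write
\[
K_n^{(\bar{\lambda})}(\bar{w}_m,\bar{w}_m)=\sum_{j=0}^n|\varphi_j^{(\bar{\lambda})}(\bar{w}_m)|^2=\sum_{j=0}^n|\overline{\varphi_j^{(\lambda)}(w_m)}|^2=K_n^{(\lambda)}(w_m,w_m).
\]
With all four factors matching, the two absolute values coincide for every $m$, and the sets are equal.

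The argument is essentially bookkeeping, so I do not anticipate a deep obstacle; the step most prone to error is the index alignment, namely verifying that $\{\vec{\eta}^{(\bar{\lambda})}(\bar{w}_m)\}$ really is assembled from the zeros of $\Phi_{n+1}^{(\bar{\lambda})}(z;\bar{\lambda})$ so that the substitution into Theorem \ref{mixalex} is legitimate, and that the conjugation relation is applied to the two Alexandrov families with the correct parameter in each slot. Both are guaranteed by the reality of the Verblunsky coefficients of $\mu$ together with the remarks immediately preceding the statement, and the excluded values $\lambda=\pm1$ are precisely those for which the construction degenerates (either $1-\bar{\lambda}=0$ or $\lambda=\bar{\lambda}$).
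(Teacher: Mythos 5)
Your proposal is correct and follows essentially the same route as the paper: compute $\vec{\eta}(1)\cdot\vec{\eta}^{(\lambda)}(w_m)$ via Theorem \ref{mixalex}, run the ``similar calculation'' for $\bar{\lambda}$ and $\bar{w}_m$ (which the paper leaves implicit and you write out), and match absolute values factor by factor using $\overline{\varphi_j^{(\lambda)}(z)}=\varphi_j^{(\bar{\lambda})}(\bar{z})$, which gives $K_n^{(\bar{\lambda})}(\bar{w}_m,\bar{w}_m)=K_n^{(\lambda)}(w_m,w_m)$. Your term-by-term identity is exactly the paper's observation that the result ``relies only on the fact that taking a complex conjugate does not change the absolute value.''
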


The content of Theorem \ref{balance} is not especially deep in that it relies only on the fact that taking a complex conjugate does not change the absolute value.  However, it is useful in that it leads us to an interesting example in an infinite dimensional space, which we explore in the next section.

\subsection{Infinite Dimensions}\label{infinite}

Now we will consider the analog of balanced states in infinite dimensional vector spaces.  MUBs have been previously considered in infinite dimensional spaces \cite{WW08}, and it was noted there that a perfect analog is not possible.  One must relax the conditions somewhat to find an appropriate analog and we will do the same to define balanced states.  For our calculations, we will consider vectors in $\ell^{\infty}(\bbN)$, but still use the inner product in $\ell^2(\bbN)$.  Notice that if $\vec{v}_1\in\ell^{\infty}(\bbN)$, then in order to ensure that the inner product $\vec{v}_1\cdot\vec{v}_2$ converges, one would ordinarily need $\vec{v}_2\in\ell^1(\bbN)$.  We will construct a balanced state for which the necessary inner products converge for an uncountable collection of vectors that are analogous to the basis vectors of Section \ref{balapp}, but the state vector is not in $\ell^1(\bbN)$.

In analogy with our earlier notation, let us define
\[
\vec{\varphi}(z)=(\varphi_0(z),\varphi_1(z),\varphi_2(z),\ldots)^T.
\]
By Theorem \ref{mixalex}, we can still define the inner product $\vec{\varphi}(z)\cdot\vec{\varphi}^{(\lambda)}(w)$ if
\begin{equation}\label{zerofin}
\lim_{\nri}\varphi_n(z)=0,\qquad\qquad\limsup_{\nri}|\varphi_n^{(\lambda)}(w)|<\infty.
\end{equation}
In this case, we have (if $|z|=|w|=1$)
\begin{align*}
\vec{\varphi}(z)\cdot\vec{\varphi}^{(\lambda)}(w)&=\lim_{\nri}\sum_{j=0}^n\overline{\varphi_j(z)}\varphi_j^{(\lambda)}(w)\\
&=\lim_{\nri}\frac{1-\bar{\lambda}-w\bar{z}\overline{\varphi_n(z)}\varphi_n^{(\lambda)}(w)+\bar{\lambda}\overline{\varphi_n^*(z)}\varphi_n^{(\lambda)*}(w)}{1-w\bar{z}}\\
&=\frac{1-\bar{\lambda}}{1-w\bar{z}}.
\end{align*}
We used the fact that $|\varphi_n(z)|=|\varphi_n^*(z)|$ when $|z|=1$.

To apply this idea, we need to find a measure $\mu$ for which \eqref{zerofin} holds for some $z,w,\lambda\in\partial\bbD$.  We can find an example of such a measure in \cite{Ranga10}.  The measure is of the form $s(\theta)d\theta$, where
\begin{equation}\label{example1}
s(\theta)=\tau[\sin(\theta/2)]^{2b},\qquad\qquad0\leq\theta\leq2\pi,
\end{equation}
and $-1/2<b<0$ (see \cite[Theorem 4.1]{Ranga10}).  The number $\tau$ is a normalization constant that makes this a probability measure (and depends on $b$).  The Verblunsky coefficients for this measure are given by
\[
\alpha_n=-\frac{b}{b+n+1}
\]
(see \cite[Theorem 3.1]{Ranga10}).

The first thing to observe about this sequence is that $\sum|\alpha_n|^2<\infty$.  Therefore, in order to verify the conditions \eqref{zerofin}, it suffices to consider the analogous conditions for the monic orthogonal polynomials.  Secondly, we notice that this is a sequence of bounded variation, meaning
\[
\sum_{n=0}^{\infty}|\alpha_{n+1}-\alpha_n|<\infty.
\]
Since the same condition holds if we replace $\alpha_n$ by $\lambda\alpha_n$, we conclude that the Verblunsky coefficients for $\mu^{(\lambda)}$ are also of bounded variation.  Therefore, we may apply \cite[Theorem 10.12.5]{OPUC2} to conclude that if $w\in\partial\bbD\setminus\{1\}$, then
\[
\limsup_{\nri}|\varphi_n^{(\lambda)}(w)|<\infty.
\]
It also follows from \cite[Equation 3.5]{Ranga10} that
\begin{equation}\label{notl1}
\Phi_n(1)=\frac{(2b+1)_n}{(b+1)_n}=\prod_{j=1}^n\left(1+\frac{b}{b+j}\right),
\end{equation}
which tends to zero as $\nri$ (see \cite[page 155]{SS03}).  Here we used the Pochhammer symbol
\[
(c)_n=c(c+1)\cdots(c+n-1),\qquad\qquad n\in\bbN
\]
and $(c)_0=1$.  We have therefore verified the conditions \eqref{zerofin} for this example.  We summarize our findings in the following theorem.

\begin{theorem}\label{balance2}
Suppose $\mu$ is given by $s(\theta)d\theta$ with $s$ as in \eqref{example1} and $\lambda\in\partial\bbD\setminus\{-1,1\}$.  Then the state $\vec{\varphi}(1)$ is balanced with respect to the sets
\[
\left\{\vec{\varphi}^{(\lambda)}(w)\right\}_{w\in\partial\bbD\setminus\{1\}},\qquad\mbox{and}\qquad \left\{\vec{\varphi}^{(\bar{\lambda})}(w)\right\}_{w\in\partial\bbD\setminus\{1\}}.
\]
In other words, the sets
\[
\left\{|\vec{\varphi}(1)\cdot\vec{\varphi}^{(\lambda)}(w)|\right\}_{w\in\partial\bbD\setminus\{1\}},\qquad\mbox{and}\qquad \left\{|\vec{\varphi}(1)\cdot\vec{\varphi}^{(\bar{\lambda})}(w)|\right\}_{w\in\partial\bbD\setminus\{1\}}
\]
are the same.
\end{theorem}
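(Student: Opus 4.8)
The plan is to exploit the fact that both families in the statement are indexed by the same set $\partial\bbD\setminus\{1\}$, so it suffices to show that for each fixed $w\in\partial\bbD\setminus\{1\}$ the two magnitudes $|\vec{\varphi}(1)\cdot\vec{\varphi}^{(\lambda)}(w)|$ and $|\vec{\varphi}(1)\cdot\vec{\varphi}^{(\bar{\lambda})}(w)|$ coincide; this is in fact stronger than the set equality required of a balanced state. First I would record that the relevant inner products are well-defined. Since $\mu$ has real Verblunsky coefficients, $\varphi_j(1)\in\bbR$, and since $\Phi_n(1)\to0$ by \eqref{notl1} while the normalization $\prod_{j=0}^{n-1}\sqrt{1-|\alpha_j|^2}$ converges to a positive limit (because $\sum|\alpha_n|^2<\infty$), we get $\varphi_n(1)\to0$. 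Together with the bounded-variation estimate $\limsup_{\nri}|\varphi_n^{(\lambda)}(w)|<\infty$ for $w\neq1$, this verifies the hypotheses \eqref{zerofin} for the pair $(1,w)$, and the identical reasoning applies with $\lambda$ replaced by $\bar{\lambda}$, whose Verblunsky coefficients $\bar{\lambda}\alpha_n$ are again of bounded variation.

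Next I would evaluate the two inner products using the infinite-dimensional limit of the mixed CD formula established just above, namely $\vec{\varphi}(z)\cdot\vec{\varphi}^{(\lambda)}(w)=(1-\bar{\lambda})/(1-w\bar{z})$ whenever \eqref{zerofin} holds and $|z|=|w|=1$. Setting $z=1$ (so that $\bar{z}=1$ and $\overline{\varphi_j(1)}=\varphi_j(1)$) gives
\[
\vec{\varphi}(1)\cdot\vec{\varphi}^{(\lambda)}(w)=\frac{1-\bar{\lambda}}{1-w},\qquad \vec{\varphi}(1)\cdot\vec{\varphi}^{(\bar{\lambda})}(w)=\frac{1-\lambda}{1-w},
\]
where the second identity comes from substituting $\bar{\lambda}$ for $\lambda$ in the formula and using $\overline{(\bar{\lambda})}=\lambda$.

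Finally I would take absolute values and observe that $|1-\bar{\lambda}|=|1-\lambda|$, since $1-\lambda=\overline{1-\bar{\lambda}}$ and complex conjugation preserves modulus. Hence $|\vec{\varphi}(1)\cdot\vec{\varphi}^{(\lambda)}(w)|=|1-\lambda|/|1-w|=|\vec{\varphi}(1)\cdot\vec{\varphi}^{(\bar{\lambda})}(w)|$ for every $w\in\partial\bbD\setminus\{1\}$, so the two indexed families of magnitudes agree term by term and the displayed sets are identical. The algebra here is immediate; the genuine content — and the only place where the specific measure \eqref{example1} enters — is the verification of \eqref{zerofin}, which guarantees that $\vec{\varphi}(1)$ (a sequence in $\ell^\infty$ but not $\ell^1$) nonetheless pairs convergently with the generalized basis vectors. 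I therefore expect that convergence step, rather than the final modulus computation, to be the substantive part of the argument, and it has already been dispatched in the discussion preceding the theorem.
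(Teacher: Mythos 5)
Your proposal is correct and follows essentially the same route as the paper, which likewise verifies \eqref{zerofin} via $\sum|\alpha_n|^2<\infty$, the bounded-variation criterion, and $\Phi_n(1)\to0$, then reads off $\vec{\varphi}(1)\cdot\vec{\varphi}^{(\lambda)}(w)=(1-\bar{\lambda})/(1-w)$ from the limiting mixed CD formula and concludes from $|1-\lambda|=|1-\bar{\lambda}|$. Your observation that the pointwise (term-by-term) equality of moduli is stronger than the required set equality matches the paper's remark that the result ultimately rests on conjugation preserving absolute value.
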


Notice that part of the content of Theorem \ref{balance2} again only depends on the fact that complex conjugation does not change the absolute value.  However, it is significant because equation \eqref{notl1} shows that $\vec{\varphi}(1)\not\in\ell^1(\bbN)$.  That means the balanced state we have constructed has a well-defined inner product with an uncountable collection of $\ell^{\infty}(\bbN)$ vectors, but the vector itself is not in $\ell^1(\bbN)$.

Theorem \ref{balance2} finds a single state that is balanced with respect to two uncountable collections.  For any $p\in\bbN$, one can find examples with $p$ balanced states by appealing to sieved measures (see Section \ref{sieve}).  Indeed, if we take the example from Theorem \ref{balance2} and $p$-sieve it, then the formulas from Section \ref{sieve} imply that conditions \eqref{zerofin} are satisfied for any $z$ for which $z^p=1$ and $w\in\partial\bbD$ for which $w^p\neq1$.  This leads us to the following result.

\begin{theorem}\label{balance3}
Suppose $\nu$ is given by $s(\theta)d\theta$ with $s$ as in \eqref{example1}.  Suppose $p\in\bbN$ and $\mu$ is the $p$-sieved version of the measure $\nu$. Suppose $\lambda\in\partial\bbD\setminus\{-1,1\}$.  If $u^p=1$, then the state $\vec{\varphi}(u)$ is balanced with respect to the sets
\[
\left\{\vec{\varphi}^{(\lambda)}(w)\right\}_{w\in\partial\bbD\setminus\{z:z^p=1\}},\qquad\mbox{and}\qquad \left\{\vec{\varphi}^{(\bar{\lambda})}(w)\right\}_{w\in\partial\bbD\setminus\{z:z^p=1\}}.
\]
In other words, the sets
\[
\left\{|\vec{\varphi}(u)\cdot\vec{\varphi}^{(\lambda)}(w)|\right\}_{w\in\partial\bbD\setminus\{z:z^p=1\}},\qquad\mbox{and}\qquad \left\{|\vec{\varphi}(u)\cdot\vec{\varphi}^{(\bar{\lambda})}(w)|\right\}_{w\in\partial\bbD\setminus\{z:z^p=1\}}
\]
are the same.
\end{theorem}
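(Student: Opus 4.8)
The plan is to follow the template of Theorem \ref{balance2} essentially verbatim, the only new ingredient being the verification that the convergence conditions \eqref{zerofin} are preserved under $p$-sieving for the relevant $z$ and $w$; once these hold, the inner product computation from Section \ref{infinite} applies unchanged and the balancing is immediate. So the work splits into two limits (one for the state vector, one for the test vectors) followed by a one-line conclusion.

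First I would handle the state vector. Take $z=u$ with $u^p=1$. The sieving formula \eqref{sivpoly} gives $\tilde\Phi_{kp+m}(u)=u^m\Phi_k(u^p)=u^m\Phi_k(1)$, so $|\tilde\Phi_{kp+m}(u)|=|\Phi_k(1)|$, which tends to $0$ by \eqref{notl1}. Because the sieved Verblunsky coefficients are just the original coefficients reindexed and padded with zeros, $\sum|\tilde\alpha_n|^2=\sum|\alpha_k|^2<\infty$, so the normalizing products $\prod_{j}\sqrt{1-|\tilde\alpha_j|^2}$ converge to a positive constant and hence $\lim_{\nri}\tilde\varphi_n(u)=0$.

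The second limit carries the real content. Here I would first observe that sieving commutes with the Alexandrov construction: since $\alpha_n(\mu^{(\lambda)})=\lambda\tilde\alpha_n$ and $\tilde\alpha_n$ is supported on the indices $n\equiv p-1\pmod p$, the Verblunsky coefficients of $\mu^{(\lambda)}$ are precisely the $p$-sieve of the coefficients $\lambda\alpha_k$ of $\nu^{(\lambda)}$. Thus $\mu^{(\lambda)}$ is itself the $p$-sieve of $\nu^{(\lambda)}$, and \eqref{sivpoly} applies to the Alexandrov family, giving $\tilde\Phi_{kp+m}^{(\lambda)}(w)=w^m\Phi_k^{(\lambda)}(w^p)$. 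For $|w|=1$ with $w^p\neq1$ this yields $|\tilde\varphi_{kp+m}^{(\lambda)}(w)|=|\varphi_k^{(\lambda)}(w^p)|$, and since $w^p\in\partial\bbD\setminus\{1\}$ the bound established in the proof of Theorem \ref{balance2} (via the bounded variation of the coefficients and \cite[Theorem 10.12.5]{OPUC2}) gives $\limsup_{\nri}|\tilde\varphi_n^{(\lambda)}(w)|<\infty$. This confirms \eqref{zerofin}.

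With \eqref{zerofin} verified, the inner product formula of Section \ref{infinite} gives $\vec{\varphi}(u)\cdot\vec{\varphi}^{(\lambda)}(w)=(1-\bar{\lambda})/(1-w\bar{u})$ and, on replacing $\lambda$ by $\bar{\lambda}$, $\vec{\varphi}(u)\cdot\vec{\varphi}^{(\bar{\lambda})}(w)=(1-\lambda)/(1-w\bar{u})$. Taking absolute values and using $|1-\bar{\lambda}|=|1-\lambda|$ shows these agree for every admissible $w$, so the two indexed sets of moduli coincide pointwise in $w$ and are therefore equal as sets. I do not anticipate any serious obstacle: the genuine step is the commutation of sieving with the Alexandrov map, which reduces the $\limsup$ bound to the unsieved case already treated, after which the balancing collapses—exactly as in Theorems \ref{balance} and \ref{balance2}—to the invariance of modulus under conjugation.
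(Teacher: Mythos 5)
Your proof is correct and follows the same route the paper takes (the paper only sketches it in the paragraph preceding the theorem, asserting that the sieving formulas yield \eqref{zerofin} for $z^p=1$ and $w^p\neq1$). You supply the details the paper leaves implicit — notably that the Alexandrov map commutes with $p$-sieving, so that $|\tilde{\varphi}^{(\lambda)}_{kp+m}(w)|=|\varphi^{(\lambda)}_k(w^p)|$ reduces the $\limsup$ bound to the unsieved case — and the remaining steps match Theorem \ref{balance2} exactly.
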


\vspace{7mm}

\end{document}